\newtheorem{theorem}{Theorem}
\newtheorem{corollary}{Corollary}
\newcommand{\edgeki}[2]{   \ensuremath{ e(#1,#2)}  } 
\newcommand{\nodeki}[2]{   \ensuremath{ v(#1,#2)}  }
\newcommand{\waitingtime}[2]{ \ensuremath{   w^{#1}_{#2}   }}
\newcommand{\actioni}[1]{   \ensuremath{ \boldsymbol w ^{#1}}  }
\newcommand{\actionspace}[1]{   \ensuremath{ \mathcal W ^{#1}}  }
\newcommand{\DT}[2]{   \ensuremath{ t_{#1}^{#2}}  }
\newcommand{\TT}[2]{\ensuremath{ \tau(#1,#2)}}
\newcommand{\RTT}[2]{\ensuremath{ \boldsymbol\tau(#1,#2)}}
\newcommand{\C}[4]{\ensuremath{ \mathcal C\left(#1,#2,#3,#4\right)}} 
\newcommand{\missonstart}[2]{   \ensuremath{ \tau_{#1}^{#2}}  }
\newcommand{\ES}[1]{\ensuremath{\mathsf{E}\left[#1 \right]}}
\newcommand{\CES}[2]{\ensuremath{\mathsf{E}\left[\left. #1\right| #2\right]}}
\newcommand{\DDT}[3]{\ensuremath{ t^{#2}_{#1\left| #3\right.}}}
\newcommand{\WT}[3]{\ensuremath{ w^{#2}_{#1\left| #3\right.}}} 
\newcommand{\AWT}[2]{\ensuremath{ \boldsymbol{w}^{#1 }_{ #2}}} 
\newcommand{\MC}[5]{\ensuremath{  {\mathcal C}_{#5}\left(#1,#2,#3,#4\right)}} 
\newcommand{\actionipotp}[1]{   \ensuremath{ \boldsymbol{ \widehat{w}}^{#1}  }  }
\newcommand{\actionipotpp}[1]{   \ensuremath{ \boldsymbol{ \widecheck{w}}^{#1}  }  }
\newcommand{\waitingtimep}[2]{   \ensuremath{  \widehat{w}}^{#1}_{#2}  }
\newcommand{\waitingtimepp}[2]{   \ensuremath{  \widecheck{w}}^{#1}_{#2}  }
\newcommand{\DTp}[2]{   \ensuremath{  \hat{t}_{#2}^{#1} }  } 
\newcommand{\DTpp}[2]{   \ensuremath{  \check{t}_{#2}^{#1}  }  } 
\newcommand{\NE}[1]{   \ensuremath{ \boldsymbol{ \ring{w}}^{#1}  }  }
\begin{document}

\IEEEoverridecommandlockouts

\title{Strategic Hub-Based Platoon Coordination under Uncertain Travel Times }

\author{\IEEEauthorblockN{Alexander Johansson, Ehsan Nekouei, Karl Henrik Johansson and Jonas M\aa rtensson }	\thanks{This work is supported by the Strategic Vehicle Research and Innovation Programme through the project Sweden for Platooning, Horizon 2020 through the project Ensemble, the Knut and Alice Wallenberg Foundation, the Swedish Foundation for Strategic Research and the Swedish Research Council. } 
	\thanks{A. Johansson, K. H. Johansson and J. M\aa rtensson are with the Integrated Transport Research Lab and Division of Decision and Control,
		School of Electrical Engineering and Computer Science, KTH Royal Institute
		of Technology, Stockholm, Sweden.,
		SE-100 44 Stockholm, Sweden. Emails:
		{\tt\small \{alexjoha, kallej, jonas1\}@kth.se}}
	\thanks{E. Nekouei are with the Department of Electrical Engineering,
	 City University of Hong Kong, Hong Kong. 
	 Email: {\tt \small \{enekouei\}@cityu.edu.hk}  }}
	
\maketitle

\thispagestyle{plain}
\pagestyle{plain}

\begin{abstract}

We study the strategic interaction among vehicles in a non-cooperative platoon coordination game. Vehicles have predefined routes in a transportation network with a set of hubs where vehicles can wait for other vehicles to form platoons. Vehicles decide on their waiting times at hubs and the utility function of each vehicle includes both the benefit from platooning and the cost of waiting. We show that the platoon coordination game is a potential game when the travel times are either deterministic or stochastic, and the vehicles decide on their waiting times at the beginning of their journeys. We also propose two feedback solutions for the coordination problem when the travel times are stochastic and vehicles are allowed to update their strategies along their routes.  The solutions are evaluated in a simulation study over the Swedish road network. It is shown that uncertainty in travel times affects the total benefit of platooning drastically and the benefit from platooning in the system increases significantly when utilizing feedback solutions.

\end{abstract}

\section{Introduction}

\IEEEPARstart{V}{ehicle} platoon refers to a group of vehicles that drive in a formation with small inter-vehicular distances. The lead vehicle in a platoon, called the platoon leader, is typically maneuvered by a human driver. The other vehicles, called the platoon followers, automatically follow their respectively in-front-driving vehicles. We refer the interested reader to \cite{Horowitz2000} and \cite{Besselink2016} for overviews of control designs and system architectures for vehicle platooning. Platooning has the potential to be a substantial element in the future intelligent transportation system thanks to the following benefits:
\begin{itemize}
	\item Decreased workload of the drivers in the follower vehicles. This can potentially lead to enormous savings if the drivers can utilize their time to perform administrative duties or if the platoon followers can be unmanned. According to a recent estimate \cite{Chottani2018}, the total cost of ownership of trucks in the US may decrease by $10\%$ in the period 2022--2027 due to platooning with unmanned followers.
	\item Increased road capacity and safer driving thanks to synchronized driving and shared information between vehicles in platoons, \emph{e.g.,} sharing vehicle parameters, sudden accelerations, road conditions, surrounding traffic, etc. Increased road capacity and safety were demonstrated in numerical simulations in \cite{Ioannou1993} and \cite{Fernandes2012}. The simulation study in \cite{Jo2019} over the Korean transport network demonstrated capacity improvements and decreased travel times.
	\item Reduced fuel consumption thanks to reduced air drag. This was demonstrated in numerical studies in \cite{Davila2013}, \cite{Bishop2017} and by field experiments in \cite{Alam2015}, \cite{Browand2004}, \cite{Tsugawa2016}, where potential energy savings of around $10 \%$ were reported.
\end{itemize}

Vehicles need coordination in order to meet in the transportation network to form platoons. In this paper, we consider the coordination problem illustrated in Figure \ref{largenetwork2}, where vehicles can wait and form platoons at certain locations, called hubs. Examples of hubs in today's transportation infrastructure are  freight terminals, gas stations, parking places, tolling stations and harbors. The rest time of drivers is strictly regulated and long-distance drivers are forced to rest with a certain regularity. Resting places are ideal hub locations since the  drivers can rest while waiting for other vehicles to platoon with. An alternative to forming platoons at hubs is to form platoons on the road, without stopping at hubs. Then, a platoon can be formed if some vehicles speed up or some vehicles slow down. The main drawback of forming platoons on the road is that vehicles that slow down may decrease the traffic flow and vehicles that speed up may violate the speed limits.

\begin{figure}
	\centering
	\includegraphics[scale=0.44]{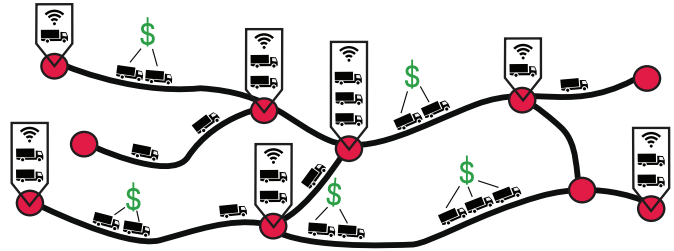}
	\caption{Vehicles at hubs (nodes) in the transportation network communicate and can decide to wait for others in order to form platoons. Vehicles that enter a road (edges) at the same time form a platoon on that road and benefit from platooning.}
	\label{largenetwork2}
\end{figure}

 Solutions to the hub-based platoon coordination problem have been proposed in \cite{Zhang2017}, \cite{Boysen2018}, \cite{Larsen2019}, where the aim was to maximize the total profit from platooning assuming all vehicles are owned by the same transportation company or vehicles have the same objective. The authors in \cite{Zhang2017} considered the platoon coordination problem of two vehicles with stochastic travel times, and in \cite{Boysen2018} and \cite{Larsen2019} it was assumed that the travel times were deterministic.

The problem of platoon coordination among  competitive transportation firms has been studied in \cite{Farokhi2013}, \cite{Sun2019} and our past research effort \cite{Johansson2018}. In \cite{Sun2019}, a socially optimal solution was proposed that maximizes the total profit from platooning on a common road with deterministic travel times, and the platooning profit is distributed between the vehicles such that vehicles have no incentive to leave their platoons. Different from \cite{Sun2019}, we analyze in this paper the strategic behavior of the vehicles, captured by the notion of Nash equilibrium~(NE), on a general road network with stochastic travel times.

Strategic platoon coordination problems wherein each vehicle seeks to maximize its own profit from platooning was considered in \cite{Farokhi2013} and \cite{Johansson2018}. The solutions in \cite{Farokhi2013} and \cite{Johansson2018} are limited to one-edge graphs and  tree graphs, respectively, where vehicles have the same origin and the travel times on all edges are deterministic. Different from these proposals, the solution in this paper holds for general graphs with arbitrary vehicle routes and stochastic travel times.

Cooperative solutions to the platoon coordination problem where vehicles slow down or speed up in order to form platoons have been studied in \cite{Liang2016}, \cite{Larsson2015}, \cite{Hoef2018}, and \cite{Xiong2019}, where the aim was to maximize the total profit from platooning for all vehicles. Different from these papers, we assume that vehicles are owned by competing transportation  companies and each vehicle is interested in optimizing its individual utility function. A review on planning strategies for platooning, including platoon coordination, is given in \cite{Bhoopalam2018}.

For the hub-based platoon coordination problem considered in this paper, vehicles decide on how long time to wait at hubs in order to maximize their individual utility functions. This is realistic if we consider vehicles that are owned by different transportation companies. The utility functions include both the benefit from platooning and the cost of waiting. Vehicles have arbitrary origins and destinations in a road network with general topology. Moreover, we consider time-varying travel times, as well as both deterministic and stochastic travel times. Figure \ref{TTasd} shows the travel time and its standard deviation on the highway between Botkyrka to V{\"a}stertorp close to Stockholm, Sweden, for vehicles departing from Botkyrka in $5$-minute intervals between 5:00 and 11:00 a.m. The data was collected during $10$ working days. According to this figure, the uncertainty of the travel time is higher during the traffic peak period compared with the off-peak periods. This motivates taking into account uncertainty in travel times especially when vehicles pass by highways near urban regions.

\begin{figure}[t!]
	\centering
	\subfigure[Travel times  ]{\includegraphics[scale=0.6]{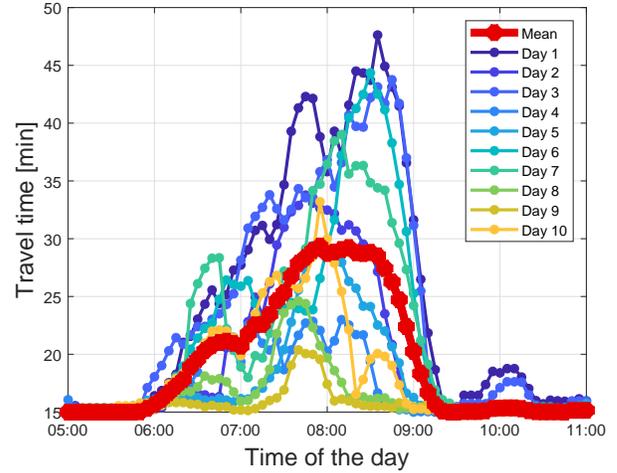}\label{TT}}
	
	\subfigure[Standard deviation ]{\includegraphics[scale=0.6]{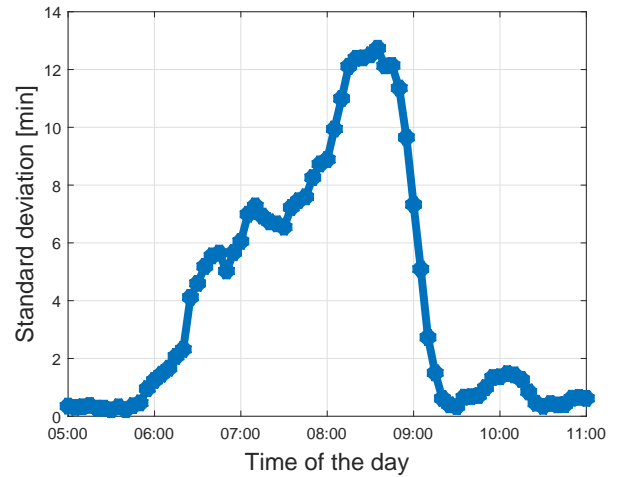}\label{TTsd}}
	
	\caption{Travel time and its standard deviation at $10$ working days between Botkyrka and V\"astertorp. The travel time during the off-peak periods is approximately $15$~min and the standard deviation is higher during the peak period.}
	\label{TTasd}
	
\end{figure}

The contributions of this paper can be summarized as follows:

\begin{itemize}
	\item We develop two game-theoretic models of the platoon coordination problem. The first game is for deterministic travel times and the second game is for stochastic travel times. 
	
	\item We show that these platoon coordination games are exact potential games and admit at least one pure Nash equilibrium~(NE). The NE is the open-loop solution to the platoon coordination problem, both for deterministic and stochastic travel times.  In the open-loop solution, vehicles calculate their waiting times initially and do not update their waiting times along their journeys.
	  
	\item To counter uncertain stochastic travel times, we propose two feedback solutions to the platoon coordination problem in which vehicles can update their decisions along their journeys.

	\item We perform a simulation study over the Swedish road network to evaluate the
	proposed platoon coordination solutions. The travel
	times on the roads in the simulation are time-varying
	and have a stochastic time-delay that is generated from real
	travel time data. Another objective of the simulation study is to investigate the potential benefits of non-cooperative platooning when vehicles are owned by different transportation  companies and aim to maximize their individual  profits from platooning.
\end{itemize}

This paper is structured as follows. The platoon coordination problem with deterministic travel times and its game formulation is considered in Section \ref{DPMGS}. In Section \ref{PMGSt1}, we consider the problem with stochastic travel times. In Section \ref{PMPS}, two feedback solutions are proposed, where the actions of vehicles are updated along with their journeys, for the stochastic setup in Section \ref{PMGSt1}. We evaluate the platoon coordination solutions in a simulation study in Section \ref{sim}. Finally, the conclusions are given in Section \ref{con}.

\section{Platoon coordination game with deterministic travel times}\label{DPMGS}

In this section, we define the platoon coordination problem when the vehicles' travel times are known \textit{a priori}. First, the system model is presented, which includes the graph representation of the road network, vehicles' waiting times at hubs, their departure times from the hubs, and their benefit from platooning. Then, a game that models the platoon coordination is presented and we show that it admits an NE.

\subsection{Graph representation of road network}\label{SM}

We consider a road network represented by a directed graph $\mathcal G=(\mathcal V,\mathcal E)$. The nodes $\mathcal V$ represent locations (hubs) where vehicles can wait in order to platoon with others. The edges $\mathcal E$ represent roads that connect the locations where vehicles can wait.  Let $\TT{e}{t}\in \mathbb{Z}_+$ denote the travel time on edge $e \in \mathcal E$ of vehicles that enters $e$ at time instance $t \in \mathbb{Z}_+$. The vehicles to be coordinated are enumerated $1$ to $N$, and the set of vehicles is denoted $\mathcal{N}~=~\{1,...,N\}$. Each vehicle has a fixed path in the road network to traverse. An example of the path of vehicle $i$ is illustrated in Figure \ref{pathostate}. The set of edges in the path of vehicle $i$ is denoted $\mathcal P^i$. The $k$th edge and $k$th node in the path of vehicle $i$ are denoted $\edgeki{i}{k}\in \mathcal E$ and $\nodeki{i}{k} \in \mathcal V$, respectively. The edge $\edgeki{i}{k}$ thus connects nodes $\nodeki{i}{k}$ and $\nodeki{i}{k+1}$. The time instance when vehicle $i$ starts its journey (arrives to its first node $\nodeki{i}{1}$) is denoted $\missonstart{0}{i}\in \mathbb{Z}_+$. With a slight abuse of notation we define $\tau$ to be the set containing both the travel times $\TT{e}{t}$, for all $e\in \mathcal E$ and all $t\in \mathbb{Z}_+$, and start times $\missonstart{0}{i}$, for all $ i\in \mathcal N$. In our setup, time is discrete and we assume that the time step length is small in comparison to the travel times between hubs.

\begin{figure}[t!]
	\centering
	\subfigure[Three nodes and two edges in the path of vehicle $i$.   ]{\includegraphics[scale=0.7]{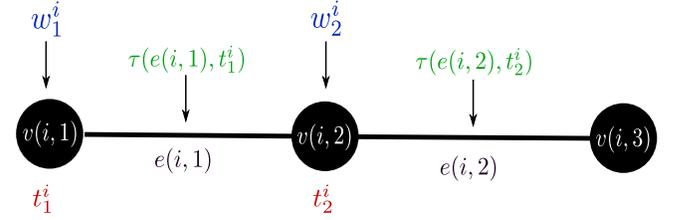}\label{path}}
	
	\subfigure[Vehicle $i$'s location as a function of time. The location is a node or an edge. The start time, waiting times, travel times and departure times are marked on the time axis.	]{\includegraphics[scale=1]{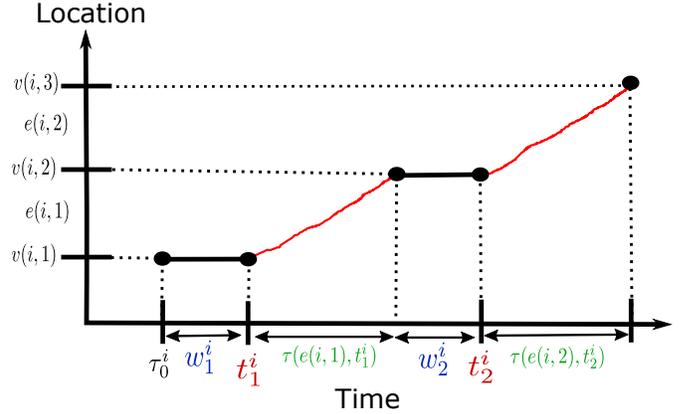}\label{state}}
	
	\caption{The path of vehicle $i$ and its location as a function of time are illustrated in (a) and (b), respectively. The waiting times at nodes (blue), the travel times on edges (green) and the departure times at nodes (red) are marked. }
	
	\label{pathostate}
	
\end{figure}

\subsection{Waiting times at nodes} \label{static}

The vehicles decide to wait at nodes in order to form platoons with others. The waiting times are the actions of the  vehicles. The waiting time of vehicle $i$ at the node $\nodeki{i}{k}$ is denoted $\waitingtime{i}{k}\in \mathbb{Z}_+$. The waiting times of vehicle $i$ at the nodes in its path is denoted  $\boldsymbol w^i~=~(\waitingtime{i}{1},...,\waitingtime{i}{|\mathcal P^i|}) $. The action space of vehicle $i$ is finite and denoted by  $\actionspace{i}$. The action space is assumed to be finite due to various restrictions such as mission deadlines or rest period restrictions. The waiting times of all vehicles are denoted by $\boldsymbol w$, and by $\actioni{-i}$ we denote the waiting times of all vehicles except vehicle $i$.

\subsection{Departure times at nodes}

The departure time of vehicle $i$ from the node $\nodeki{i}{k}$ (the $k$th node in its path) is denoted $\DT{k}{i}$. The departure time of vehicle $i$ from node $\nodeki{i}{1}$ (its first node) is
\begin{equation}\label{firstdeparturetime}
\DT{1}{i}= \missonstart{0}{i}+\waitingtime{i}{1},
\end{equation}
where $\missonstart{0}{i}$ is the time instance when vehicle $i$ arrives to node $\nodeki{i}{1}$. The departure times of vehicle $i$ from the other nodes in its path are recursively calculated as 
\begin{align}\label{arrivaltimes}
\DT{k+1}{i}=\waitingtime{i}{k+1}+\DT{k}{i}+\TT{\edgeki{i}{k}}{\DT{k}{i}}, 
\end{align}
where $\TT{\edgeki{i}{k}}{\DT{k}{i}}$ is the travel time on edge $\edgeki{i}{k}$ if it is entered at time instance $\DT{k}{i}$. Note that the departure times of vehicles from nodes depend on the elements in $\tau$ and $\boldsymbol w$, that is, the travel times on edges, the start times of vehicles, and their waiting times at nodes. From \eqref{firstdeparturetime} and \eqref{arrivaltimes}, vehicle $i$ can calculate its departure times from all nodes in its path. The departure times are used next to indicate which vehicles form platoons, at which edges the platoons are formed, and when the platoons are formed.

\subsection{Utility functions} \label{utilitysec}

If a group of vehicles departs from a node and enter the same edge at the same time instance, then they form a platoon and therefore benefit from platooning. The set of vehicles that enter edge $e$ at time instance $t$ is denoted 
\begin{equation}\label{PF}
\C{e}{t}{\boldsymbol w}{\tau}  =~   \{i\in~ \mathcal N | \edgeki{i}{k}=~e, \ \DT{k}{i}=~t \}.
\end{equation}
From \eqref{firstdeparturetime} and \eqref{arrivaltimes}, it is seen that $\C{e}{t}{\boldsymbol w}{\tau}$ depends on $\tau$ and $\boldsymbol w$. The set $\C{e}{t}{\boldsymbol w}{\tau}$ is important since it indicates the vehicles that start platooning on edge $e$ at time instance $t$. 
 
 The platooning reward of a vehicle that traverses an edge $e$ in a platoon of $n$ vehicles is denoted by $R(n,e)$. It only depends on the edge and the number of vehicles in the platoon, so the platoon members get equal reward in our model for simplicity. This assumption is valid if the platooning benefit is shared equally, via transactions,  within each platoon. The platooning benefit may be, for instance, the reduced fuel consumption and decreased workload of the drivers. The platooning reward on each edge may depend on its length, road gradient information, and other factors that have an impact on the platooning benefit. The reward of each vehicle in the platoon, which is formed at time instance $t$ on edge $e$ is denoted by $R(|\C{e}{t}{\boldsymbol w}{\tau} |,e)$. 
 
  The utility function of each vehicle includes the platooning reward at each edge in its path and the cost of waiting at nodes. The utility function of vehicle $i$ is 
 \begin{align}\label{utility}
 &U^i\left( \actioni{i}, \actioni{-i},\tau \right)= \\ \nonumber & \hspace{1cm}
 \sum \limits_{k=1}^{|\mathcal P^i|}R\left (|\C{\edgeki{i}{k}}{\DT{k}{i}}{\boldsymbol w}{\tau} |, \edgeki{i}{k} \right )   -\Lambda_i(\actioni{i}),
 \end{align}
where $\Lambda_i(\actioni{i})$ denotes vehicle $i$'s cost of waiting at nodes. 

\subsection{Deterministic platoon coordination game}\label{DPMG}

We model the interaction among the vehicles by a non-cooperative game, where each vehicle aims to maximize its own profit. The players of the game are the vehicles in the set $\mathcal N$. The action of each vehicle $i\in \mathcal N$ is its waiting times $\actioni{i}\in \actionspace{i}$. The action space of the game is $\mathcal W=\actionspace{1} \times ...\times \actionspace{N} $, where $\times$ denotes the Cartesian product. The platoon coordination game with deterministic travel times is defined by the triplet $G^d~=~\left( \mathcal N, \mathcal W, \mathcal U^d\right)$, where $\mathcal U^d~=~\{U^i(\actioni{i},\actioni{-i},\tau)\}_i$. We consider a pure NE as the solution concept of the game. 

The theory of potential games is utilized and explored later for showing existence and computation of a pure NE. Potential games were defined and their most important properties shown in \cite{Monder1996}. Congestion games, which were introduced in \cite{Rosenthal1973}, falls under the class of potential games. In \cite{Marden2009} and \cite{Etesami2015}, multi-agent systems were analyzed within the framework of potential games.

\subsection{Existence of NE}
Before we state our result on existence of pure NE, we start by defining the notions of pure NE and exact potential games. A pure NE of the game $G^d$ is an action profile $ \NE{} \in \mathcal  W$ such that, for all $i\in \mathcal N$ and all  $\actioni{i}\in \actionspace{i}$, we have
\begin{equation}
U^i(\NE{i},\NE{-i},\tau)\geq U^i(\actioni{i},\NE{-i},\tau).
\end{equation} 
The game $G^d$ is an exact potential game if there exists a function $\Phi: \mathcal W \rightarrow \mathbb{R}$ such that, for all $i\in \mathcal{N}$ and all $  \actionipotp{i},\actionipotpp{i}\in \actionspace{i} $ and all $ \actioni{-i} \in   \actionspace{-i}$, we have
\begin{align}\label{PotentialProperty} \nonumber
&\Phi(\actionipotp{i},\actioni{-i},\tau)-\Phi( \actionipotpp{i},\actioni{-i},\tau)= \\ & \hspace{3cm} U_i(\actionipotp{i},\actioni{-i},\tau)-U_i( \actionipotpp{i},\actioni{-i},\tau).
\end{align}
 Every finite potential game admits at least one pure NE \cite{Monder1996}.

\begin{theorem}\label{lemma1}
	The deterministic platoon coordination game $G^d$ is an exact potential game with potential function 
	\begin{align}\label{pot}
	\Phi(\boldsymbol{w},\tau) =  \sum \limits_{t\in \mathbb{Z}_+ }  \sum   \limits_{e\in \mathcal E}    r(| \C{e}{t}{\boldsymbol{w}}{\tau} |,e)      -   \sum \limits_{i \in \mathcal N } \Lambda_i( \actioni{i}),
	\end{align}
	where 
	\begin{equation}\label{Eq: Integ}
	r(n,e)=\sum \limits_{j=1}^{n}R(j,e),
	\end{equation}  
	and it thus admits at least one pure NE.
\end{theorem}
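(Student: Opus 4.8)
The plan is to verify the exact potential property \eqref{PotentialProperty} directly. First I would split both objects into a platooning-reward part and a waiting-cost part: write $\Phi(\boldsymbol w,\tau)=\Phi_R(\boldsymbol w,\tau)-\sum_{j\in\mathcal N}\Lambda_j(\actioni j)$ with $\Phi_R(\boldsymbol w,\tau)=\sum_{t}\sum_{e}r(|\C{e}{t}{\boldsymbol w}{\tau}|,e)$ (a finite sum, since only finitely many pairs $(e,t)$ give a nonempty set and $r(0,e)=0$), and write $U^i(\actioni i,\actioni{-i},\tau)=U^i_R(\actioni i,\actioni{-i},\tau)-\Lambda_i(\actioni i)$ as in \eqref{utility}. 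Because a unilateral deviation of vehicle $i$ leaves $\actioni j$ unchanged for $j\neq i$, the cost parts already satisfy $\big({-}\sum_j\Lambda_j(\actionipotp j)\big)-\big({-}\sum_j\Lambda_j(\actionipotpp j)\big)=-\Lambda_i(\actionipotp i)+\Lambda_i(\actionipotpp i)$, which is exactly the cost-part difference of $U^i$. Hence the whole statement reduces to showing $\Phi_R(\actionipotp i,\actioni{-i},\tau)-\Phi_R(\actionipotpp i,\actioni{-i},\tau)=U^i_R(\actionipotp i,\actioni{-i},\tau)-U^i_R(\actionipotpp i,\actioni{-i},\tau)$.

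The key structural observations I would isolate next, ideally as small lemmas, are: (i) for deterministic travel times, the departure times $\DT k j$ of any vehicle $j$ — hence which edge/time pairs it occupies and its membership in every set $\C e t {\boldsymbol w}{\tau}$ — depend only on $\actioni j$ and $\tau$ through the recursion \eqref{firstdeparturetime}--\eqref{arrivaltimes}, and not on $\actioni{-j}$; and (ii) since travel times are strictly positive, departure times strictly increase along a path, so the pairs $\{(\edgeki i k,\DT k i)\}_{k=1}^{|\mathcal P^i|}$ are distinct under each action. Call these sets of pairs $\widehat T^i$ (under $\actionipotp i$) and $\widecheck T^i$ (under $\actionipotpp i$). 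From (i), when vehicle $i$ switches from $\actionipotpp i$ to $\actionipotp i$ the set $\C e t {\cdot}{\tau}$ changes at most by removal of $i$ (if $(e,t)\in\widecheck T^i$) and/or addition of $i$ (if $(e,t)\in\widehat T^i$). Letting $s_{e,t}$ denote the common number of vehicles $j\neq i$ in $\C e t {\cdot}{\tau}$, the cardinality $|\C e t {\cdot}{\tau}|$ equals $s_{e,t}+1$ on $\widehat T^i$, resp.\ $\widecheck T^i$, and $s_{e,t}$ otherwise.

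With this notation the computation is a telescoping argument based on $r(n,e)-r(n-1,e)=R(n,e)$, which follows from \eqref{Eq: Integ}. Pairs in $\widehat T^i\cap\widecheck T^i$ and pairs outside $\widehat T^i\cup\widecheck T^i$ contribute the same cardinality under both actions, so they cancel in $\Phi_R(\actionipotp i,\actioni{-i},\tau)-\Phi_R(\actionipotpp i,\actioni{-i},\tau)$; a pair in $\widehat T^i\setminus\widecheck T^i$ contributes $r(s_{e,t}+1,e)-r(s_{e,t},e)=R(s_{e,t}+1,e)$, and a pair in $\widecheck T^i\setminus\widehat T^i$ contributes $-R(s_{e,t}+1,e)$. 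Thus $\Phi_R(\actionipotp i,\actioni{-i},\tau)-\Phi_R(\actionipotpp i,\actioni{-i},\tau)=\sum_{(e,t)\in\widehat T^i\setminus\widecheck T^i}R(s_{e,t}+1,e)-\sum_{(e,t)\in\widecheck T^i\setminus\widehat T^i}R(s_{e,t}+1,e)$. On the utility side, using distinctness of the pairs along a path, $U^i_R(\actionipotp i,\actioni{-i},\tau)=\sum_{(e,t)\in\widehat T^i}R(s_{e,t}+1,e)$ and $U^i_R(\actionipotpp i,\actioni{-i},\tau)=\sum_{(e,t)\in\widecheck T^i}R(s_{e,t}+1,e)$; subtracting and cancelling the common pairs (on which $s_{e,t}$ agrees) yields exactly the same expression, proving \eqref{PotentialProperty}. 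Since each $\actionspace i$ is finite, $\mathcal W$ is finite, and a finite exact potential game admits at least one pure NE by the Monderer--Shapley theorem \cite{Monder1996}, which gives the last assertion.

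The step I expect to require the most care is the bookkeeping in the third paragraph: ensuring that the sums over $k\in\{1,\dots,|\mathcal P^i|\}$ in the utility genuinely collapse to sums over the sets $\widehat T^i$, $\widecheck T^i$ of \emph{distinct} edge/time pairs (so that nothing is double-counted and the ``${+}1$'' in $s_{e,t}+1$ is correct even when a path repeats an edge), and handling cleanly the pairs where $i$'s entry time is unchanged by the deviation, i.e.\ $\widehat T^i\cap\widecheck T^i$, so that they cancel identically on both sides. The two structural facts — independence of each vehicle's schedule from the others' actions, and strict monotonicity of departure times along a path — are the load-bearing observations, and I would state and prove them before attempting the main calculation.
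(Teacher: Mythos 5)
Your proposal is correct and follows essentially the same route as the paper's proof: both rest on the facts that a unilateral deviation by vehicle $i$ changes each set $\C{e}{t}{\boldsymbol w}{\tau}$ only by inserting or removing $i$ itself, and that the telescoping identity $r(n+1,e)-r(n,e)=R(n+1,e)$ converts the resulting cardinality shifts into exactly the reward differences in $U^i$, with everything else cancelling. Your bookkeeping via the sets of edge/time pairs $\widehat T^i,\widecheck T^i$ is just a slightly more explicit version of the paper's decomposition over the edges $\mathcal P^i_d$ with uncommon entering times, and the concluding appeal to finiteness and \cite{Monder1996} is identical.
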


\begin{proof}
	See Appendix.
\end{proof}

\subsection{NE seeking algorithm}

In finite potential games, if each player updates its action one at a time according to its best response function, the action profile converges to a pure NE \cite{Monder1996}, \cite{Nisan2007}. The best response function of vehicle $i$, given $\actioni{-i}$, is defined as 
\begin{equation}\label{brf}
B^i (\actioni{-i})=  \underset{\actioni{i}\in  \actionspace{i} }{\arg\max}   \ U^i(\actioni{i}, \actioni{-i},\tau).
\end{equation}
 The best response $B^i (\actioni{-i})$ can simply be computed by looping over all actions in $\actionspace{i}$.  Algorithm \ref{alg} is an NE seeking algorithm based on the best response dynamics. The algorithm converges since the game $G^d$ is a finite potential game. Later, in the simulation study in Section~\ref{sim}, Algorithm \ref{alg} is used to find a pure NE for the vehicles. In practice, The NE of the game can be computed by a trusted computing entity, e.g., a cloud. In this approach, each vehicle transmits the necessary information, e.g., its location, origin, destination, to the computing entity via the Internet and mobile network. The computing entity then computes the NE of the game using the best response algorithm and informs each truck about its NE strategy.

\begin{algorithm} 
	\SetKwInOut{Input}{input}\SetKwInOut{Output}{output}
	\Input{Initial strategy profile, $\boldsymbol w=(\actioni{1},...,\actioni{N})$}
	\Output{NE, $\NE{}  $}
	\BlankLine
$\boldsymbol w^{old}=\boldsymbol w+\Delta, $ where $  \Delta\neq 0$ \\	
	\While{$\boldsymbol w^{old}   \neq  \boldsymbol w^{}  $ }{ $\boldsymbol w^{old}  =\boldsymbol w^{} $ \\ 
		\For{$i \in \mathcal{N}$}{
		
			$\actioni{}   = (\actioni{1},...,\actioni{i-1},B^i (\actioni{-i} ),\actioni{i+1},...,\actioni{N})$
		}
	}
	$\NE{}=\boldsymbol { w}$
	\caption{NE seeking algorithm}
	\label{alg}
\end{algorithm}

\section{Platoon coordination game with stochastic travel times: an open-loop solution } \label{PMGSt1}

In the previous section, we assumed the travel times to be known \textit{a priori}. In this section, we consider the platoon coordination problem when travel times are stochastic. The vehicles that form platoons and therefore vehicles' platooning rewards are not known initially, even if the waiting times of vehicles are given. In this section, we first extend notations to cover stochastic travel times and we define utility functions that include the expected platooning reward and waiting cost. Then, the game that models the platoon coordination scenario with stochastic travel times is presented and we show that it admits a pure NE.

\subsection{Stochastic travel times and expected utility}\label{RSM}

 The stochastic travel time over edge $e$ of vehicles that enter it at time instance $t$ is denoted by the stochastic variable $\RTT{e}{t}$. The start time of vehicle $i$ is denoted by the stochastic variable $\boldsymbol\tau_{0}^i$. We define $\boldsymbol \tau$ to be the set that contains the stochastic variables $\RTT{e}{t}$ and $\boldsymbol\tau_{0}^i$, for all $e\in \mathcal E$, $t\in \mathbb{Z}_+$ and all $ i\in \mathcal N$. The realization of $\RTT{e}{t}$, $\boldsymbol\tau_{0}^i$ and $\boldsymbol \tau$ are denoted by $\TT{e}{t}$, $\tau_{0}^i$ and $\tau$, respectively. The probability of the event $\boldsymbol \tau=\tau$ is denoted $\Pr(\boldsymbol\tau=\tau)$. Given the event $\boldsymbol\tau=\tau$ and the waiting times $\actioni{i}$, the departure times at the nodes in the path of each vehicle~$i$ are calculated by the recursions in \eqref{firstdeparturetime} and \eqref{arrivaltimes}. Therefore, given the event $\boldsymbol\tau=\tau$ and waiting times $\actioni{}$, the set of vehicles that form platoons are indicated by \eqref{PF} and the utility of vehicle $i$ is given by \eqref{utility}.  When the travel times are stochastic, the utility function of vehicle $i$ is defined as $U^i(\actioni{i},\actioni{-i})=~\ES{U^i(\actioni{i},\actioni{-i},\boldsymbol \tau)}$, which can be written 
\begin{equation*}
U^i(\actioni{i},\actioni{-i})=\sum \limits_{\tau} \Pr(\boldsymbol\tau=\tau) U^i(\actioni{i},\actioni{-i}, \tau),
\end{equation*}
which includes the expected platooning reward and cost of waiting.

\subsection{Stochastic platoon coordination game}\label{SPMG}

The non-cooperative game that models the interaction among vehicles when the travel times are stochastic is defined by the triplet $G^s~=~\left( \mathcal N, \mathcal W, \mathcal U^s\right)$, where $  \mathcal U^s~=~\{U^i(\actioni{i},\actioni{-i})\}_i$. Note that the stochastic platoon coordination game $G^s$ and the deterministic platoon coordination game $G^d$ differ in their utility functions, but their sets of players and action spaces are the same.

\subsection{Existence of NE}

The notions of pure NE and exact potential games in case of the stochastic platoon coordination game is similar to the deterministic case. The pure NE of $G^s$ is defined as in Section \ref{DPMG}, but with $U^i(\actioni{i},\actioni{-i})$ instead of $U^i(\actioni{i},\actioni{-i}, \tau)$. The potential game in case of stochastic travel times is defined as in Section~\ref{DPMG}, but with 	$\Phi(\actioni{i},\actioni{-i})$ instead of $\Phi(\actioni{i},\actioni{-i},\tau)$, with obvious modifications.

	\begin{theorem}\label{lemma2}

	The stochastic platoon coordination game $G^s$ is an exact potential game with potential function

	\begin{align}\label{pot3}
	\Phi(\boldsymbol{w}) =  \sum \limits_{\tau} \Pr(\boldsymbol{\tau}=\tau) \Phi(\boldsymbol{w},\tau),
	\end{align}
	where $\Phi(\boldsymbol{w},\tau)$ is given in \eqref{pot}, with $\tau$ being the deterministic travel times and start times of the vehicles. Hence, $G^s$ admits at least one pure NE.
	\end{theorem}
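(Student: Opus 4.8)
The plan is to piggyback entirely on Theorem~\ref{lemma1} together with the linearity of expectation. Fix a vehicle $i\in\mathcal N$, two actions $\actionipotp{i},\actionipotpp{i}\in\actionspace{i}$, and a profile $\actioni{-i}\in\actionspace{-i}$ of the remaining vehicles. For a \emph{frozen} realization $\tau$ of $\boldsymbol\tau$, the stochastic game reduces to exactly the deterministic game $G^d$ with travel-time-and-start-time profile $\tau$, so Theorem~\ref{lemma1} applies verbatim and yields
\begin{equation*}
\Phi(\actionipotp{i},\actioni{-i},\tau)-\Phi(\actionipotpp{i},\actioni{-i},\tau)= U^i(\actionipotp{i},\actioni{-i},\tau)-U^i(\actionipotpp{i},\actioni{-i},\tau).
\end{equation*}

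The next step is to multiply this identity by $\Pr(\boldsymbol\tau=\tau)$ and sum over all realizations $\tau$. On the right-hand side, the definition of the expected utility, $U^i(\actioni{i},\actioni{-i})=\ES{U^i(\actioni{i},\actioni{-i},\boldsymbol\tau)}=\sum_{\tau}\Pr(\boldsymbol\tau=\tau)\,U^i(\actioni{i},\actioni{-i},\tau)$, collapses the weighted sum to $U^i(\actionipotp{i},\actioni{-i})-U^i(\actionipotpp{i},\actioni{-i})$. On the left-hand side, the definition \eqref{pot3} of $\Phi(\boldsymbol w)$ collapses it to $\Phi(\actionipotp{i},\actioni{-i})-\Phi(\actionipotpp{i},\actioni{-i})$. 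Equating the two gives precisely the exact-potential property \eqref{PotentialProperty} for $G^s$ with potential $\Phi(\boldsymbol w)$. Finiteness of $\mathcal W$ then lets us invoke the theorem of Monderer and Shapley \cite{Monder1996} to conclude that $G^s$ admits at least one pure NE.

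The only point that requires a word of justification rather than real work is the interchange of the difference with the (possibly countably infinite, since travel times range over $\mathbb Z_+$) sum over realizations $\tau$, and the finiteness of all the expectations involved. This is immediate under the standing modelling assumptions: the action spaces $\actionspace{i}$ are finite, and the rewards $R(n,e)$ and waiting costs $\Lambda_i$ are bounded, so the expected utilities $U^i(\cdot)$ and the expected potential $\Phi(\cdot)$ in \eqref{pot3} are well defined and finite, and the term-by-term application of linearity of expectation to the deterministic identity above is legitimate. I do not expect any genuine obstacle here — the substance of the argument is entirely contained in Theorem~\ref{lemma1}, and Theorem~\ref{lemma2} is essentially the observation that ``a convex combination of exact potentials for the same game (over varying data) is an exact potential for the expected game.''
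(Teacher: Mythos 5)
Your proposal is correct and follows essentially the same route as the paper: apply Theorem~\ref{lemma1} pointwise for each realization $\tau$, then take the expectation over $\boldsymbol{\tau}$ and use linearity to transfer the exact-potential identity to $U^i(\cdot)$ and $\Phi(\cdot)$. The extra remark on well-definedness of the expectations is a harmless addition not present in the paper's proof, but the substance of the argument is identical.
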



\begin{proof}
	
	See Appendix.
	
\end{proof}

The best response function of the game $G^s$ can be defined as in \eqref{brf}, but with $U^i(\actioni{i},\actioni{-i})$ instead of $U^i(\actioni{i},\actioni{-i}, \tau)$. Then, Algorithm \ref{alg} can be used to find an NE. Algorithm \ref{alg} is guaranteed to converge since $G^s$ is a finite potential game.

\section{Platoon coordination game with stochastic travel times: a feedback solution}\label{PMPS}

In this section, we propose a competitive stochastic decision-making process model for the platoon coordination problem under stochastic travel times. As illustrated in Figure \ref{DMC}, the state of each vehicle $i$ in this model is its location (a node or an edge) and its control input is its waiting times. In this model, the disturbance is the uncertainty of the stochastic travel times on the edges. Two feedback solutions are developed for this problem, where vehicles can update their waiting times at decision-making instances. First, we explain the model and extend notations to cover multiple decision-making instances. Then, the feedback solutions are given.

\subsection{Decision-making instances and experienced travel time}

In the feedback solution, we allow vehicles to update their decisions, at time instances, when at least one vehicle is located at a node. Hence, we allow vehicles to improve their decisions based on the observed information such as the history of travel times and current locations of the vehicles. The $n$th decision-making instance is denoted $t_n^\star\in \mathbb{Z}_+$. Note that decision-making instances are stochastic. The history of travel times up to the $n$th decision-making instance is denoted by the stochastic variable $\boldsymbol{\tau}^{\rm h}_n$ and its realization is denoted by ${\tau}^{\rm h}_n$. The conditional probability of $\boldsymbol{\tau}~=~{\tau}$,  given the history $\boldsymbol{\tau}^{\rm h}_n~=~{\tau}^{\rm h}_n$, is denoted  $\Pr(\boldsymbol{\tau}~=~{\tau}|\boldsymbol{\tau}^{\rm h}_n~=~{\tau}^{\rm h}_n)$. Note that the travel times in $\boldsymbol{\tau}$ that has been observed up to the decision-making instance $t_n^\star$ are known to the vehicles, given the history of travel times ${\tau}^{\rm h}_n$. Let the stochastic variable $\boldsymbol{\tau}^i_{n}$ denote the number of time instances left until vehicle~$i$ arrives to a node in its path, at the decision-making instance $t^\star_n$. The realization of $\boldsymbol{\tau}^i_{n}$ is denoted by ${\tau}^i_{n}$. If vehicle $i$ is located at a node at the decision-making instance $t^\star_n$, then $\tau^i_{n}=0$.

\subsection{Decision variables}

 If vehicle $i$ is located on edge $\edgeki{i}{k}$  at a decision-making instance, then it calculates its waiting times at its $H$ next nodes $\nodeki{i}{k+1},...,\nodeki{i}{k+H}$.  If vehicle $i$ is located at node $\nodeki{i}{k}$ at a decision-making instance, then it calculates its waiting times at nodes $\nodeki{i}{k},...,\nodeki{i}{k+H}$.  The waiting time of vehicle $i$ at node $\nodeki{i}{k}$, calculated at the $n$th decision-making instance, is denoted $\WT{k}{i}{n}$. The waiting times calculated by vehicle $i$, at the decision-making instance $t_n^\star$, are denoted $\AWT{i}{n}=(\WT{k+1}{i}{n},...,\WT{k+H}{i}{n}  )$ or $\AWT{i}{n}=(\WT{k}{i}{n},...,\WT{k+H}{i}{n} )$, depending on whether it is located on edge $\edgeki{i}{k}$ or at node $\nodeki{i}{k}$. The waiting times of all vehicles are denoted $\boldsymbol{w}_n$. The action space of vehicle $i$ at the $n$th decision-making instance is denoted $\mathcal W^i_n$. The action space is updated at each decision-making instance according to the history of actions. The action space of all vehicles, at the decision-making instance $t_n^\star$, is denoted by $\mathcal W_n$.

 Vehicles that are located at nodes at a decision-making instance leave their nodes if their calculated waiting times are zero. That is, if vehicle $i$ is located at $\nodeki{i}{k}$ at the $n$th decision-making instance, then it leaves node $\nodeki{i}{k}$ if its calculated waiting time $\WT{k}{i}{n}=0$. Vehicle $i$ stays at node $\nodeki{i}{k}$ at the $n$th decision-making if its calculated waiting time $\WT{k}{i}{n}~>~0$. If vehicle $i$ stays at $\nodeki{i}{k}$, it will have a chance to update its waiting time there since it triggers decision-making in the next time instance. 
 

	\begin{figure}
	\centering
	\includegraphics[scale=0.88]{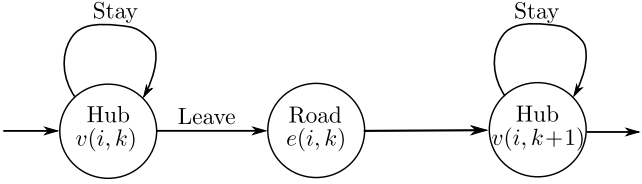}
	\caption{A state machine illustrating the decision-making process. When vehicle $i$ is located at a node it decides whether to stay at the node or leave. If vehicle $i$ decides to stay, it remains at the node until the next time instance when it decides again whether to stay or leave. If vehicle $i$ decides to leave, it departs from the node and enters its next edge. This process is repeated for all nodes and edges in the path of vehicle $i$. }
		
	\label{DMC}
\end{figure}

\subsection{Departure times and utility functions}

Let $\DDT{k}{i}{n}$ denote vehicle $i$'s departure time at node $\nodeki{i}{k}$ calculated at the $n$th decision-making instance. Given $\boldsymbol \tau=~\tau$ and $ \AWT{i}{n}$, if vehicle $i$ is located on edge $\edgeki{i}{k}$,  then its departure time from node $\nodeki{i}{k+1} $ is computed as   
\begin{align}\label{SDDT}
\DDT{k+1}{i}{n}=t_n^\star+ \tau^i_{n}+\WT{k+1}{i}{n},
\end{align}
where $\tau_{n}^i$ is the number of time instances left until it arrives to node $\nodeki{i}{k+1}$. If vehicle $i$ is located at node $\nodeki{i}{k+1}$ at the $n$th decision-making instance, then the departure time from $\nodeki{i}{k+1}$ is computed by equation \eqref{SDDT} with $\tau_{n}^i=0$. The departure times from the remaining nodes are computed as 
\begin{align}\label{SDDTp}
\DDT{l+1}{i}{n}=\DDT{l}{i}{n}
+\TT{\edgeki{i}{l}}{\DDT{l}{i}{n}}+\WT{l+1}{i}{n},
\end{align}
where $l>k$. 

Given $\boldsymbol \tau=\tau$ and $ \AWT{i}{n}$, the set of vehicles that enter edge $e$ at time instance $t$ is denoted by $\MC{e}{t}{\boldsymbol w_n}{ \tau}{n} $, and if vehicle $i$ is located on edge $\edgeki{i}{j}$ at the $n$th decision-making instance, its utility function is 
\begin{align}\label{utilitydrhs}
& U_{n}^i\left( \AWT{i}{n}, \AWT{-i}{n},  \tau \right)= \\ \nonumber  & \hspace{0.4cm} \sum \limits_{k=j+1}^{j+H}R\Big (|\MC{\edgeki{i}{k}}{\DDT{k}{i}{n}}{\boldsymbol w_n} { \tau}{n} |, \edgeki{i}{k} \Big )    -\Lambda_i(\AWT{i}{n}),
\end{align}
where $\Lambda_i(\AWT{i}{n})$ is the cost of waiting for its entire trip. Here, the dependency on the waiting times at nodes outside the horizon is not indicated explicitly.

\subsection{Deterministic receding horizon solution }\label{DRHS}

Under this solution, the vehicles compute their waiting times at their remaining waiting nodes using the conditional mean of the travel times on edges. The mean is rounded to the nearest integer since the travel times are required to be integer-valued. The conditional (rounded) mean of $\boldsymbol{\tau}$ given the experienced travel times  $\boldsymbol{\tau}^{\rm h}_n={\tau}^{\rm h}_n$, is denoted $\bar \tau_n$. The departure times of vehicles are computed by the recursions \eqref{SDDT} and \eqref{SDDTp} with travel times given by $\boldsymbol \tau_n~=~\bar\tau_n$. Each vehicle aims to maximize its own utility function $U_{n}^i\left( \AWT{i}{n}, \AWT{-i}{n},  \bar\tau_n \right)$, which is given in \eqref{utilitydrhs}. The solution is an NE of the non-cooperative game defined by  $G^d_n~=~\left(\mathcal N, \mathcal W_n, \mathcal U_n^d \right)$, where $\mathcal U_n^d ~=~\{	U_{n}^i\left(\AWT{i}{n}, \AWT{-i}{n}, \bar \tau_n \right)\}_i $.

\begin{corollary}
  The game $G^d_n$ is an exact potential game and thus admits at least one NE.
\end{corollary}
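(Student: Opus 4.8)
The plan is to observe that the game $G^d_n$ is, structurally, an instance of the deterministic platoon coordination game $G^d$ already analyzed in Theorem~\ref{lemma1}, so that the result follows by the same argument rather than a fresh computation. Concretely, fix the decision-making instance $t_n^\star$, the realized history ${\tau}^{\rm h}_n$ (which pins down $\bar\tau_n$ and the residual travel times $\tau^i_n$), and the current location of each vehicle. Relative to this frozen data, each vehicle~$i$ chooses $\AWT{i}{n}$ from the finite set $\mathcal W^i_n$, and the recursions \eqref{SDDT}--\eqref{SDDTp} determine the departure times $\DDT{k}{i}{n}$ and hence the platooning sets $\MC{e}{t}{\boldsymbol w_n}{\bar\tau_n}{n}$ exactly as \eqref{firstdeparturetime}--\eqref{arrivaltimes} determine $\DT{k}{i}$ and $\C{e}{t}{\boldsymbol w}{\tau}$ in the static game. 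The only differences are cosmetic: the summation over nodes in the utility \eqref{utilitydrhs} runs over the $H$-node horizon ahead of the vehicle's present position instead of over all of $\mathcal P^i$, and the initial departure time \eqref{SDDT} carries the extra constant $t_n^\star+\tau^i_n$ in place of $\missonstart{0}{i}$.

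First I would make this correspondence precise: define, for each vehicle~$i$, an "effective path'' consisting of the edges $\edgeki{i}{k}$ for $k$ in the horizon window (with an effective start node being its current node, or the next node if it is mid-edge), an effective start time $t_n^\star+\tau^i_n$, and effective travel times given by $\bar\tau_n$ on those edges (and $+\infty$ or simply "not traversed in the relevant window'' outside it, which never affects the finitely many platooning sets that enter any utility). With these identifications the triple $G^d_n=(\mathcal N,\mathcal W_n,\mathcal U^d_n)$ is literally a deterministic platoon coordination game in the sense of Section~\ref{DPMG}, and Theorem~\ref{lemma1} applies verbatim.

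Next I would exhibit the potential function explicitly, to keep the corollary self-contained: take
\begin{equation*}
\Phi_n(\boldsymbol w_n) = \sum_{t\in\mathbb Z_+}\sum_{e\in\mathcal E} r\!\left(\left|\MC{e}{t}{\boldsymbol w_n}{\bar\tau_n}{n}\right|,e\right) \;-\; \sum_{i\in\mathcal N}\Lambda_i(\AWT{i}{n}),
\end{equation*}
with $r(n,e)=\sum_{j=1}^n R(j,e)$ as in \eqref{Eq: Integ}. Then for a unilateral deviation of vehicle~$i$ from $\actionipotpp{i}$ to $\actionipotp{i}$, only the edges in vehicle~$i$'s horizon window can change membership, and on each such edge the change in $r(|\cdot|,e)$ telescopes to exactly the reward $R(\cdot,e)$ that vehicle~$i$ gains or loses — this is the one-line observation $r(n{+}1,e)-r(n,e)=R(n{+}1,e)$ that drives the proof of Theorem~\ref{lemma1}. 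Summing over the horizon edges and subtracting the waiting-cost difference $\Lambda_i(\actionipotp{i})-\Lambda_i(\actionipotpp{i})$ yields the exact-potential identity \eqref{PotentialProperty} for $G^d_n$. Since $\mathcal W_n$ is finite (each $\mathcal W^i_n$ is finite), \cite{Monder1996} gives existence of a pure NE.

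The only point requiring a little care — the part I would flag as the main obstacle, though it is bookkeeping rather than a genuine difficulty — is checking that a unilateral change of $\AWT{i}{n}$ by vehicle~$i$ really does leave every term in $\Phi_n$ untouched except those attached to edges in $i$'s own horizon window. One must confirm that vehicle~$i$'s waiting decisions cannot alter the departure times, and hence the platooning sets, of any other vehicle~$j$ (true here because each $\DDT{k}{j}{n}$ depends only on $j$'s own $\AWT{j}{n}$, the frozen $\bar\tau_n$, $t_n^\star$, and $\tau^j_n$), and that edges in $i$'s window but not traversed by $i$ under one of the two deviations contribute a consistent, cancelling term. Once that is verified, the telescoping argument and the appeal to Theorem~\ref{lemma1}/\cite{Monder1996} finish the proof.
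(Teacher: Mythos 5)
Your proposal is correct and follows essentially the same route as the paper: the paper's proof is precisely the one-line observation that $G^d_n$ is itself a deterministic platoon coordination game, so Theorem~\ref{lemma1} applies. Your additional work — making the correspondence explicit via effective paths and start times $t_n^\star+\tau^i_n$, writing out the potential, and checking that a unilateral deviation only perturbs edges in vehicle $i$'s own horizon — is a faithful elaboration of what the paper leaves implicit, not a different argument.
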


 \begin{proof} The game $G_n^d$ is a deterministic platoon coordination game, so the result follows from Theorem \ref{lemma1}. 
\end{proof}

\subsection{Stochastic receding horizon solution }\label{SRHS}

Under this solution, the vehicles compute their waiting times at their remaining waiting nodes using the conditional distribution of the stochastic travel times. Given $\boldsymbol \tau=\tau$ and $\boldsymbol w_n$, the departure times at nodes of vehicles are computed as \eqref{SDDT} and \eqref{SDDTp}, and the individual utility of vehicle $i$ is $U_{n}^i\left(\AWT{i}{n}, \AWT{-i}{n},  \tau \right)$, as in \eqref{utilitydrhs}. When the travel times are stochastic, each vehicle $i$ aims to maximize its expected individual utility, \emph{i.e.,} 
\begin{equation*}
U_{n}^i(\AWT{i}{n},\AWT{-i}{n})=\CES{U_{n}^i(\AWT{i}{n}, \AWT{-i}{n},\boldsymbol \tau)}{\boldsymbol{\tau}^{\rm h}_n={\tau}^{\rm h}_n},
\end{equation*}
where the conditional probabilities of the travel times are used to compute the expected utility.
The solution is an NE of the non-cooperative game defined by $G^s_n~=~\left(\mathcal N, \mathcal W_n, \mathcal U_n^s \right)$, where $\mathcal U_n^s=\{	U_{n}^i\left(\AWT{i}{n},\AWT{-i}{n} \right) \}_i $.

\begin{corollary}
	The game $G^s_n$ is an exact potential game and thus admits at least one NE.
\end{corollary}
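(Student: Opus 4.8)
The plan is to mirror the structure of the proof of Theorem~\ref{lemma2}, since the stochastic receding horizon game $G^s_n$ is built from the deterministic-per-realization game $G^d_n$ in exactly the same way that $G^s$ is built from $G^d$: its utility for vehicle $i$ is a conditional expectation over $\boldsymbol\tau$ of the per-realization utility $U_n^i(\AWT{i}{n},\AWT{-i}{n},\tau)$ of $G^d_n$. First I would invoke Corollary~1, which gives that for each fixed realization $\tau$ the game with utilities $\{U_n^i(\cdot,\cdot,\tau)\}_i$ is an exact potential game; let $\Phi_n(\boldsymbol w_n,\tau)$ denote its potential function (obtained from Theorem~\ref{lemma1} with the receding-horizon utilities \eqref{utilitydrhs}). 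The candidate potential for $G^s_n$ is then the conditional expectation
\begin{equation*}
\Phi_n(\boldsymbol w_n)=\sum_{\tau}\Pr(\boldsymbol\tau=\tau\,|\,\boldsymbol\tau^{\rm h}_n=\tau^{\rm h}_n)\,\Phi_n(\boldsymbol w_n,\tau).
\end{equation*}

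Next I would verify the exact potential property \eqref{PotentialProperty} for $G^s_n$. Fix $i$, fix $\AWT{-i}{n}\in\mathcal W^{-i}_n$, and fix two actions $\actionipotp{i},\actionipotpp{i}$ for vehicle $i$ in $\mathcal W^i_n$. For each realization $\tau$, Corollary~1 gives
\begin{equation*}
\Phi_n(\actionipotp{i},\AWT{-i}{n},\tau)-\Phi_n(\actionipotpp{i},\AWT{-i}{n},\tau)=U_n^i(\actionipotp{i},\AWT{-i}{n},\tau)-U_n^i(\actionipotpp{i},\AWT{-i}{n},\tau).
\end{equation*}
Multiplying by $\Pr(\boldsymbol\tau=\tau\,|\,\boldsymbol\tau^{\rm h}_n=\tau^{\rm h}_n)$ and summing over $\tau$, the left side telescopes to $\Phi_n(\actionipotp{i},\AWT{-i}{n})-\Phi_n(\actionipotpp{i},\AWT{-i}{n})$ by definition of the candidate potential, and the right side is exactly $U_n^i(\actionipotp{i},\AWT{-i}{n})-U_n^i(\actionipotpp{i},\AWT{-i}{n})$ by linearity of conditional expectation. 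This establishes that $G^s_n$ is an exact potential game, and since the action spaces $\mathcal W_n$ are finite, existence of at least one pure NE follows from \cite{Monder1996}.

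The one point that needs a little care — and the only place this differs from a verbatim copy of Theorem~\ref{lemma2} — is making sure the summation/expectation is well defined and finite: the conditioning is on the observed history $\boldsymbol\tau^{\rm h}_n$ rather than on the empty history, and the decision-making instance $t_n^\star$ together with the offsets $\tau^i_n$ are themselves functions of $\tau^{\rm h}_n$, so they are constants once we condition. Since $\boldsymbol\tau$ ranges over a set for which $\sum_\tau\Pr(\boldsymbol\tau=\tau\,|\,\boldsymbol\tau^{\rm h}_n=\tau^{\rm h}_n)=1$ and each $\Phi_n(\boldsymbol w_n,\tau)$ is a finite sum of bounded reward terms minus a waiting cost, the conditional expectation is finite and the interchange of the finite inner sum with the outer sum is unproblematic. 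I expect no real obstacle here; the proof is essentially a one-line reduction to Corollary~1 plus linearity of expectation, exactly paralleling how Theorem~\ref{lemma2} reduces to Theorem~\ref{lemma1}.
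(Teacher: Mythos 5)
Your proposal is correct and follows essentially the same route as the paper: the paper's proof is a one-line observation that $G^s_n$ is an instance of the stochastic platoon coordination game (with the conditional distribution given $\boldsymbol{\tau}^{\rm h}_n={\tau}^{\rm h}_n$ playing the role of the prior), so Theorem~\ref{lemma2} applies directly, whereas you unpack that citation by rerunning the expectation-lift argument on top of Corollary~1. Your added care about conditioning on the history and the well-definedness of the conditional expectation is sound but not a different method.
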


\begin{proof} The game $G_n^s$ is a stochastic platoon coordination game, so the result follows from Theorem \ref{lemma2}. 
\end{proof}

\section{Simulation of the Swedish road network}\label{sim}

In this section, we perform a simulation study over the Swedish road network to evaluate the proposed solutions of the stochastic platoon coordination problem. After explaining the setup of the simulation, we first study the platooning rate and average waiting time of vehicles, as a function of the number of vehicles injected into the network. Then, we inject vehicles into the network at two different time periods and study the number of followers as a function of time. Finally, we vary the waiting time budget and the platooning benefit to see how it affects the platooning rate and the average waiting times of vehicles.

\subsection{Simulation setup}\label{setup}

We consider the graph in Figure \ref{MAP} that represents the Swedish road network. The nodes in the graph represent hubs near cities, where vehicles can wait for others in order to platoon. The edges represent roads that connect hubs. The graph in Figure \ref{MAP} has $34$ nodes and $55$ edges. The origin of each vehicle $i$ is drawn randomly from the set of nodes in the road network and the probability that a node be an origin is proportional to the population of the closest city to the node. The area of the nodes in Figure \ref{MAP} are proportional to their city populations. Then, the destination is drawn randomly from the set of nodes that fulfill that the shortest path between the origin and destination is more then $300$ km and less then $800$ km. For the nodes that fulfill the destination criteria, the probability that a node be a destination is also proportional to the size of the closest city to the node. The numerical results are obtained by Monte Carlo simulations with $50$ samples. For each sample, new origins, destinations and start times of vehicles are randomly generated.

We consider stochastic and time-varying travel times on the edges. Let $\underline \tau(e)$ denote the travel time on edge $e$ at off-peak periods, which is assumed to be deterministic. Then, the travel time of vehicles on edge $e$ that enters at time instance $t$ is $\tau(e,t)=\underline \tau(e) +\Delta(e,t)$, where $\Delta(e,t)$ is the stochastic time-delay (in number of time instances) that occurs at peak periods near urban areas. The time step length is $5$~minutes. For each edge $e$, the time-delay $\Delta(e,t)$ is drawn randomly from one of the time-delays at days 1--10 in Figure \ref{TT}, with equal probability. In the feedback solutions, at each decision-making instance, we compute the conditional expected utilities and the conditional mean of the travel times by excluding the time-delays for which trucks on links would already have arrived at their next nodes.

We assume that leaders have zero benefit from platooning, followers have equal benefit and the benefit is shared equally between platoon members. The platooning reward is defined as $R(n,e)~=~c_b l(e) (n-1)/ n$, where $l(e)$ is the length of edge $e$ and $c_b$ is the platooning benefit per kilometer. Furthermore, the waiting cost of vehicle $i$ is $\Lambda_i(\actioni{i})=c_t( \waitingtime{i}{1}+...+\waitingtime{i}{|\mathcal P^i|})$, where $c_t$ is the cost of waiting per time instance.  We assume $c_b=1.7$ SEK (Swedish crowns) and $c_t=22$ SEK. This is reasonable if the followers save $10 \%$ of fuel and their driver cost is reduced by approximately one-third. We will also vary the platooning benefit $c_b$ to study its impact on the overall performance.

 The waiting budget of each vehicle is $20$ minutes ($4$ time instances) along its whole journey. That is, the initial action space of vehicle $i$ is $\mathcal{W}^i=\{\actioni{i}|\waitingtime{i}{1}+...+\waitingtime{i}{|\mathcal P^i|}\leq 4 \}$. The action space of the vehicles in the feedback solutions are updated so that vehicles do not violate their initial waiting budget.

When the vehicles are many, the feedback solutions are slow to compute. One reason is that many computations are required when calculating a pure NE of many vehicles. Another reason is that decision-making is triggered at almost every time instance when vehicles are many. To speed up the computation of the feedback solutions, only vehicles that would arrive to a node within $20$ minutes, if they were travel with free-flow speed, update their waiting times. The horizon length of the feedback solutions is $H=2$ (the two next nodes).

\begin{figure}[t!]
	\centering
	\includegraphics[scale=0.4]{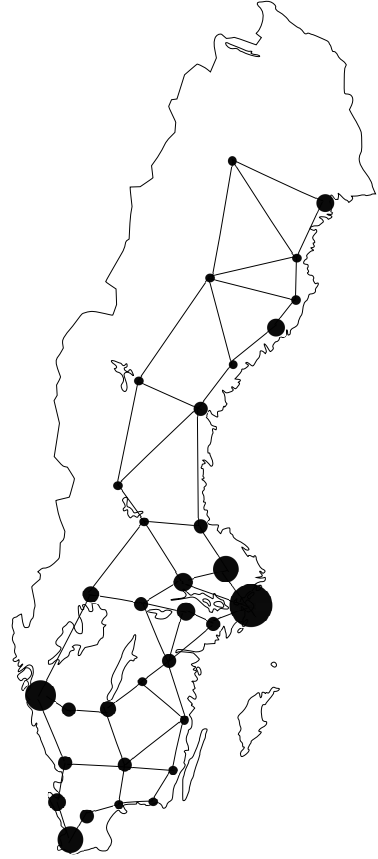}
	\caption{Map of Sweden and the considered road network.}
		\label{MAP}	\end{figure}

\subsection{Evaluation}\label{sim1}

We compare the proposed solutions of the stochastic platoon coordination problem with two additional cases: (1) Vehicles do not wait at nodes, but platoon spontaneously if they enter an edge at the same time instance, (2) vehicles have non-causal information about their future travel times (the solution is an NE of the deterministic platoon coordination game in Section \ref{DPMGS}). The results presented next are obtained by vehicles randomly entering the network between 6:30 and 8:30 a.m. 

The platooning rate and the average used waiting time are shown in Figure \ref{COMPMETHprop}. Here SRHS and DRHS denote the stochastic and deterministic receding horizon solution, respectively. The platooning rate is the ratio between the total followed distance and the total traveled distance. 

Figure \ref{COMPMETHprop1} shows that the platooning rate increases with the number of vehicles in the network. Moreover, the platooning rate is around $5\%$ higher for the stochastic receding horizon solution than for the deterministic receding horizon solution and the difference is smaller when the vehicles in the network are many. The difference in platooning rate of the receding horizon solutions and when the travel times on edges are known is less than $10\%$. The receding horizon solutions had a significantly higher platooning rate than when the vehicles only planned initially and when the vehicles only platoon spontaneously. Furthermore, Figure \ref{COMPMETHprop1} shows that non-cooperative platooning can have significant benefits on a societal scale. For example, the feedback solutions have a platooning rate of approximately $40 \% $ when $1000$  vehicles are considered in the Swedish transportation network. This corresponds to a reduction of $4\%$ of the overall fuel consumption if each follower vehicle save  $10 \% $ of fuel.

Figure \ref{COMPMETHprop2t} shows that for all proposed solutions, the waiting times of the vehicles increases up to a point where it then decreases. This is because, when vehicles are few, the vehicles have few platooning opportunities within their time windows and it might therefore be more beneficial to leave immediately without waiting for others, and when vehicles are many, vehicles does not need to wait for long in order to form platoons with others. 

In Table \ref{tableutility}, we show the total utility of the vehicles when the number of vehicles is varied. Here, KTT, IP and SP denote known travel times, initial planning and spontaneous platooning, respectively. The table shows that the feedback solutions obtain a higher total utility than the initial planning and the spontaneous solution. The table also shows that the highest total utility is achieved when the travel times are known \textit{a priori}.

\begin{figure}
	\centering

	\subfigure[Platooning rate]{\includegraphics[scale=0.62]{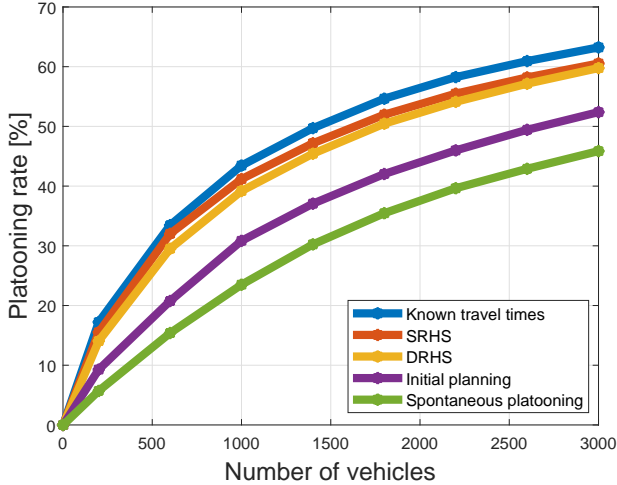}\label{COMPMETHprop1}}
	\vfill
	\subfigure[Average waiting time per vehicle.]{\includegraphics[scale=0.62]{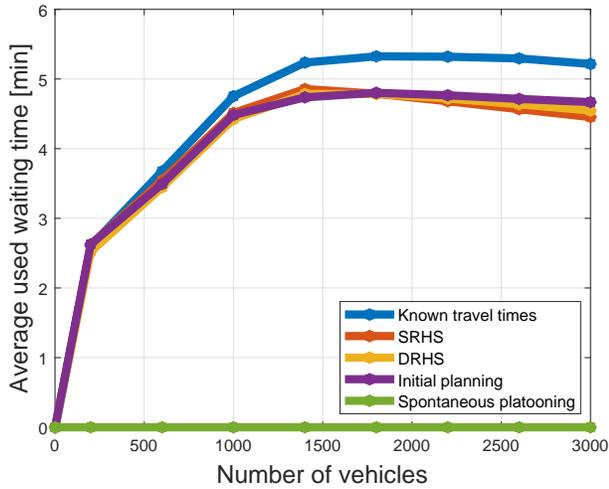}\label{COMPMETHprop2t}}
	\caption{Platooning rate and average used waiting time as a function of number of vehicles injected into the network. SRHS and DRHS stand for stochastic and deterministic receding horizon solution, respectively. }
	\label{COMPMETHprop}
	
\end{figure}

\setlength{\tabcolsep}{10pt}
\renewcommand{\arraystretch}{1.5}

\begin{table}[]
	\centering
	\caption{Total utility [Million SEK] }
	\label{tableutility}
	\begin{tabu}{r|[1.5pt] c|c|c|c|l|[1.5pt]}
		\cline{2-6}
		\multicolumn{1}{l|[1.5pt]}{} & \multicolumn{5}{c|[1.5pt]}{Number of vehicles} \\ \cline{2-6} 
		& \textbf{600} & \textbf{1000} & \textbf{1800} & \textbf{2200} & \multicolumn{1}{c|[1.5pt]}{\textbf{3000}} \\ \specialrule{.2em}{0em}{0em} 
		\multicolumn{1}{|r|[1.5pt]}{KTT} & 0.159 & 0.344 & 0.778 & 1.015 & 1.502 \\ \hline
		\multicolumn{1}{|r|[1.5pt]}{SRHS} & 0.152 & 0.326 & 0.741 & 0.966 & 1.438 \\ \hline
		\multicolumn{1}{|r|[1.5pt]}{DRHS} & 0.140 & 0.310 & 0.719 & 0.94 & 1.420 \\ \hline
		\multicolumn{1}{|r|[1.5pt]}{IP} & 0.010 & 0.244 & 0.599 & 0.801 & 1.245 \\ \hline
		\multicolumn{1}{|r|[1.5pt]}{SP} & 0.073 & 0.186 & 0.506 & 0.691 & 1.089 \\ \specialrule{.2em}{0em}{0em} 
	\end{tabu}
\end{table}

\subsection{Impact of starting times}\label{sim2}

In Figure \ref{COMPMETHtime}, the number of followers during the day is shown when the start times of vehicles are randomly distributed in the intervals 6:30--8:30 a.m. and 4:30--6:30 a.m., respectively. In both cases, the number of vehicles is fixed to $N=1000$. The number of followers increases during the periods when vehicles start their journeys. Figure  \ref{COMPMETHtime1} shows that the number of followers is significantly higher for the feedback solutions than for the solution where the vehicles only planned initially. This is because in the feedback solutions, vehicles update their waiting times and are therefore more likely to form platoons along their journeys. It is observed that the differences between the solutions are much smaller when the vehicles' start times are before the peak period than when the vehicles' start times are during the peak period. This is because when vehicles start their journeys before the peak period, platoons are formed without being exposed to uncertainty in travel times, and platoons remain intact during the peak period.

\begin{figure}[t!]
	\centering

	\subfigure[Vehicles are injected into the network 6:30--8:30 a.m. ]{\includegraphics[scale=0.63]{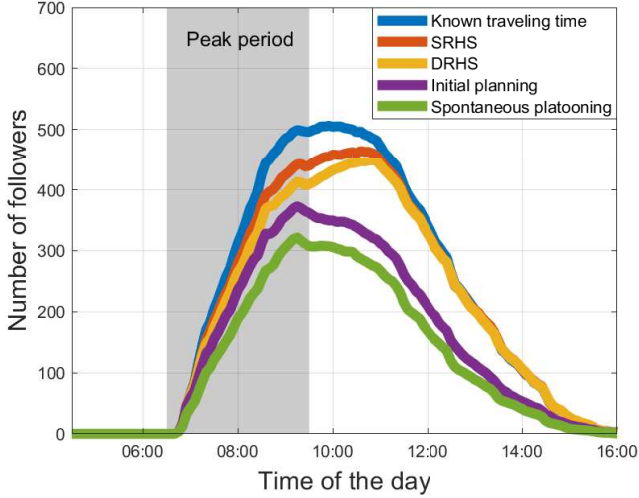}\label{COMPMETHtime1}}
	\vfill
	\subfigure[Vehicles are injected into the network 4:30--6:30 a.m.]{\includegraphics[scale=0.63]{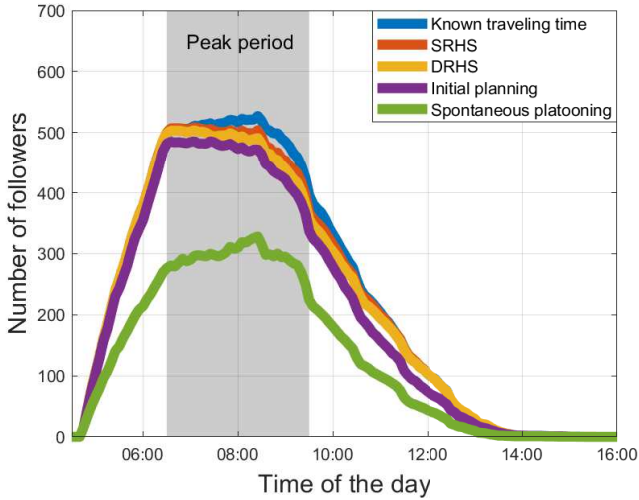}\label{COMPMETHtime2}}
	\caption{Number of followers during two days with different injection periods.}
	\label{COMPMETHtime}
	
\end{figure}

Figure \ref{sizedist} shows the platoon length distribution as a function of time when vehicles are injected into the network in the interval 6:30--8:30 a.m. The number of vehicles is fixed to $N=1000$ and the deterministic receding horizon is used. The figure shows that the most common platoon lengths are $2$ and $3$ vehicles, and that less than $5\%$ of the vehicles drive in platoons of  $7$ vehicles or more.

\begin{figure}[t!]
	\centering
	\includegraphics[scale=0.65]{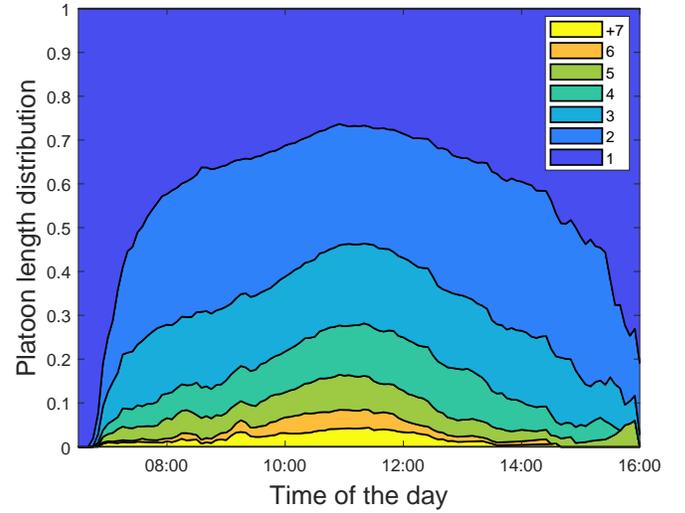}
	\caption{Platoon length distribution of the deterministic receding horizon.}
	\label{sizedist}
\end{figure}


\subsection{Impact of waiting time budget and benefit from platooning}\label{sim3}

In Table \ref{table1} and \ref{table2}, the platooning rate and average waiting time per vehicle are shown, respectively, when the benefit from platooning and the waiting budget are varied. We use three values of the platooning benefit $c_b$. The platooning benefit $c_b=0.5$ SEK represents the case when the platooning benefit is only due to the reduced fuel consumption. The platooning benefits $c_b=1.7$ SEK and $c_b=4$ SEK, represent the cases when, beyond the fuel saving, the driver cost is reduced with approximately one-third and the driver cost is eliminated, respectively. Moreover, the cost of waiting is kept fixed to $c_t=22$ SEK, $1000$ vehicles was injected into the network and the deterministic receding horizon solution was used. Table \ref{table1} shows that the platooning rate increases with the platooning benefit as well as the vehicles' waiting budget. Table~\ref{table2} shows that the average waiting time per vehicle increases when the waiting budget increases and when the platooning benefit increases.

\setlength{\tabcolsep}{10pt}
\renewcommand{\arraystretch}{1.5}


\begin{table}[]
	\centering
	\caption{Platooning rate [$\%$] }
	\label{table1}
	\begin{tabu}{ cr |[1.5pt] c  | c|c|c|l|[1.5pt]}
		\cline{3-7}
		\multicolumn{1}{l}{} & \multicolumn{1}{l|[1.5pt]}{} & \multicolumn{5}{c|[1.5pt]}{Waiting time budget [min]} \\ \cline{3-7} 
		\multicolumn{1}{l}{} &  & \textbf{5} & \textbf{10} & \textbf{15} & \textbf{20} & \textbf{25} \\ \specialrule{.2em}{0em}{0em}
		\multicolumn{1}{|c|}{} & \textbf{4} & 31.0 & 35.5 & 37.4 & 38.8 & 38.9 \\ \cline{2-7} 
		\multicolumn{1}{|c|}{$c_b$} & \textbf{1.7} & 30.9 & 35.3 & 37.2 & 38.2 & 38.3 \\ \cline{2-7} 
		\multicolumn{1}{|c|}{} & \textbf{0.5} & 29.8 & 31.6 & 32.1 & 32.1 & 32.1 \\
	 \specialrule{.2em}{0em}{0em} 
	\end{tabu}
\vspace{0.7cm}
		\centering
	\caption{Waiting time per vehicle [minutes]}
	\label{table2}
	\begin{tabu}{ cr |[1.5pt] c  | c|c|c|l|[1.5pt]}
		\cline{3-7}
		\multicolumn{1}{l}{} & \multicolumn{1}{l|[1.5pt]}{} & \multicolumn{5}{c|[1.5pt]}{Waiting time budget [min]} \\ \cline{3-7} 
		\multicolumn{1}{l}{} &  & \textbf{5} & \textbf{10} & \textbf{15} & \textbf{20} & \textbf{25} \\ \specialrule{.25em}{0em}{0em}
		\multicolumn{1}{|c|}{} & \textbf{4} & 1.97 & 3.11 & 3.64 & 4.01 & 4.01 \\ \cline{2-7} 
		\multicolumn{1}{|c|}{$c_b$} & \textbf{1.7} & 1.92 & 3.02 & 3.48 & 3.70 & 3.71 \\ \cline{2-7} 
		\multicolumn{1}{|c|}{} & \textbf{0.5} & 1.18 & 1.55 & 1.60 & 1.60 & 1.60 \\ 
\specialrule{.2em}{0em}{0em}
	\end{tabu}
\end{table}

\section{Conclusions}\label{con}

The platoon coordination problem, where vehicles can wait at hubs in a transportation network in order to platoon with others was considered in this paper. The strategic interaction among the vehicles when deciding on their waiting times at hubs was formulated as a game. Two models were developed: one with deterministic travel times and one with stochastic travel times. We have shown that both games are potential games and therefore admit at least one pure NE. The pure NEs of the games form open-loop solutions, where vehicles calculate their waiting times at the initial time instance. In the case of stochastic travel times, we proposed two feedback solutions, where vehicles update their waiting times along their journeys.

The proposed solutions have been evaluated in a simulation study over the Swedish road network, where the travel times are stochastic, time-varying, and generated from real data. It was shown in the simulation study that uncertainty in the travel times had a large impact on the platooning rate and  the feedback solutions had a significantly higher platooning rate than the open-loop solution. The platooning rate of the feedback solutions was almost as high as in the case where vehicles had ideal non-causal information of their future travel times. Furthermore, the simulation study showed that significant benefits from platooning can be obtained at a societal scale, even when vehicles aim to optimize their individual profits.

We also studied the total number of platoon followers, as a function of the time of the day, when vehicles were injected into the network during and before the peak-period. The number of platoon followers was much higher when the vehicles started their journeys before the peak period, then many platoons were formed before the peak period and the platoon formations remained intact during the peak period.

As an avenue of our future research, we will consider the equilibrium analysis under incomplete information and under more realistic cost, benefit and travel models. We will also investigate the subgame perfect equilibrium as a solution concept of a non-cooperative platoon coordination problem with stochastic travel times.

\section*{Acknowledgments}
 We thank Erik Jenelius for providing travel time data.

\appendix

\begin{proof}[Proof of Theorem \ref{lemma1}]
We show that property \eqref{PotentialProperty} holds for the candidate potential function in \eqref{pot}. Consider two feasible actions of vehicle $i$ denoted by
 $\actionipotp{i}=(\waitingtimep{i}{1},...,\waitingtimep{i}{|\mathcal P^i|})$ and $\actionipotpp{i}=(\waitingtimepp{i}{1},...,\waitingtimepp{i}{|\mathcal P^i|})$. The action profiles corresponding to $\actionipotp{i}$ and $\actionipotpp{i}$ are denoted by $\actionipotp{}=(\actionipotp{i},\actioni{-i})$ and $\actionipotpp{}=(\actionipotpp{i},\actioni{-i})$, respectively, for an arbitrary $\actioni{-i}$. Given $\tau$, the departure time of vehicle $i$ from node $\nodeki{i}{k}$ is $\DTp{i}{k}$ under action $\actionipotp{i}$ and $ \DTpp{i}{k}$ under action $\actionipotpp{i}$. Let $\mathcal P^{i}_d$ be the collection of edges with uncommon entering times under the actions $\actionipotp{i}$ and $\actionipotpp{i}$,  \emph i.e., $\mathcal P^{i}_d=\{\edgeki{i}{k}\in \mathcal P^{i}| \DTp{i}{k}\neq \DTpp{i}{k} \}$. Then, $\Phi (\actionipotp{i},\actioni{-i},\tau)$ and $\Phi (\actionipotpp{i},\actioni{-i},\tau)$ can be written as \eqref{Proof1} and \eqref{Proof2}, respectively, where $\widehat{O}_i$ and $\widecheck{O}_i$ denote the remaining terms in the summations. Note that, the terms in $\widehat{O}_i$ ($\widecheck{O}_i$) correspond to the travel times and edges which are not affected by changing the decision of vehicle $i$ from $\actionipotp{i}$ to $\actionipotpp{i}$. Thus, we have $\widehat{O}_i=\widecheck{O}_i$. Using this fact, the difference between $\Phi (\actionipotp{i},\actioni{-i},\tau)$ and $\Phi(\actionipotpp{i},\actioni{-i},\tau)$  can be expressed as  \eqref{Proof4}. Similarly, the difference between the utility of vehicle $i$ under  $\actionipotp{i}$ and $\actionipotpp{i}$ can be  written as \eqref{Proof5}. Under $\actionipotp{i}$, vehicle $i$ leaves node $\nodeki{i}{k}$ and enters edge  $\edgeki{i}{k} \in\mathcal{P}_d^{i} $ at time $\DTp{i}{k}$. Since $\DTp{i}{k}$ is different from $\DTpp{i}{k}$ when $\edgeki{i}{k}\in\mathcal{P}_d^{i}$, vehicle $i$ will not be part of the platoon which might form at node $\nodeki{i}{k}$ at time $\DTpp{i}{k}$. Thus, we have 
\begin{align}\label{Eq: Card1}
|\C{\edgeki{i}{k}}{\DTp{i}{k}}{\actionipotp{}}{\tau} |=|\C{\edgeki{i}{k}}{\DTp{i}{k}}{\actionipotpp{}}{\tau} |+1
\end{align}
for all $k$ such that $\edgeki{i}{k}\in\mathcal{P}_d^{i}$ since only vehicle $i$ changes its strategy. Similarly, we have 
\begin{align}\label{Eq: Card2}
|\C{\edgeki{i}{k}}{\DTpp{i}{k}}{\actionipotpp{}}{\tau} |=|\C{\edgeki{i}{k}}{\DTpp{i}{k}}{\actionipotp{}}{\tau} |+1
\end{align}
for all $k$ such that $\edgeki{i}{k}\in\mathcal{P}_d^{i}$. From \eqref{Eq: Integ}, we have 
\begin{align}\label{Eq: Recu}
r\left(n+1,e\right)-r\left(n,e\right)=R\left(n+1,e\right)
\end{align}
for all $n\in \mathbb Z_+$. Using \eqref{Eq: Card1}-\eqref{Eq: Recu}, we have \footnotesize
\begin{align*}
& R \big(|\C{\edgeki{i}{k}}{\DTp{i}{k}}{\actionipotp{}}{\tau} |,\edgeki{i}{k}\big)= \\ & \hspace{0.7cm} r(|\C{\edgeki{i}{k}}{\DTp{i}{k}}{\actionipotp{}}{\tau} |,\edgeki{i}{k}\big)- r(|\C{\edgeki{i}{k}}{\DTp{i}{k}}{\actionipotpp{}}{\tau} |,\edgeki{i}{k}\big)\nonumber
\end{align*}\normalsize
and \footnotesize
\begin{align*}
& R \big(|\C{\edgeki{i}{k}}{\DTpp{i}{k}}{\actionipotpp{}}{\tau} |,\edgeki{i}{k}\big)= \\ & \hspace{0.7cm} r(|\C{\edgeki{i}{k}}{\DTpp{i}{k}}{\actionipotpp{}}{\tau} |,\edgeki{i}{k}\big)- r(|\C{\edgeki{i}{k}}{\DTpp{i}{k}}{\actionipotp{}}{\tau} |,\edgeki{i}{k}\big)\nonumber.
\end{align*} \normalsize
It follows that the game is an exact potential game by the above equations and equation \eqref{Proof6}. The game thus admits at least one NE \cite{Monder1996}.

\begin{figure*}
	\small
\begin{align} 
	\Phi(\actionipotp{i},\actioni{-i},\tau)&=  \sum\limits_{\edgeki{i}{k} \in \mathcal P_d^{i}} \left[ 
r(|\C{\edgeki{i}{k}}{\DTp{i}{k}}{\actionipotp{}}{\tau} |,\edgeki{i}{k}\big)+r(|\C{\edgeki{i}{k}}{\DTpp{i}{k}}{\actionipotp{}}{\tau} |,\edgeki{i}{k}\big)\right]  -\Lambda_i(\actionipotp{i}) + \widehat{O}_i \label{Proof1} \\
	\Phi(\actionipotpp{i},\actioni{-i},\tau)&= \sum\limits_{\edgeki{i}{k} \in \mathcal P_d^{i}}  \left[ r(|\C{\edgeki{i}{k}}{\DTp{i}{k}}{\actionipotpp{}}{\tau} |,\edgeki{i}{k}\big)  +  r(|\C{\edgeki{i}{k}}{\DTpp{i}{k}}{\actionipotpp{}}{\tau} |,\edgeki{i}{k}\big) \right]  -\Lambda_i( \actionipotpp{i})+\widecheck{O}_i  \label{Proof2}
	\end{align}
	\vspace{-0.2cm}
	\hrule
	\begin{align}
	\Phi(\actionipotp{i},\actioni{-i},\tau)-	\Phi(\actionipotpp{i},\actioni{-i},\tau) &=-\Lambda_i(\actionipotp{i})    +\Lambda_i(\actionipotpp{i}) \nonumber\\
	&+\sum\limits_{\edgeki{i}{k} \in \mathcal P_d^{i}}  \bigg(  \left[r(|\C{\edgeki{i}{k}}{\DTp{i}{k}}{\actionipotp{}}{\tau}|,\edgeki{i}{k}\big)-r\big(|\C{\edgeki{i}{k}}{\DTp{i}{k}}{\actionipotpp{}}{\tau}|,\edgeki{i}{k}\big) \right] \nonumber\\
	&\hspace{3cm}-\left[ r(|\C{\edgeki{i}{k}}{\DTpp{i}{k}}{\actionipotpp{}}{\tau}|,\edgeki{i}{k}\big) -  r(|\C{\edgeki{i}{k}}{\DTpp{i}{k}}{\actionipotp{}}{\tau}|,\edgeki{i}{k}\big)\right] \bigg )\label{Proof4}        
	\end{align} \vspace{-0.2cm}
	\hrule
	\begin{align}\label{Proof5}
	& U^i(\actionipotp{i},\actioni{-i},\tau)-U^i(\actionipotpp{i},\actioni{-i},\tau)=  -\Lambda_i(\actionipotp{i})   +\Lambda_i(\actionipotpp{i}) + \!\!\!\!\!\!\sum\limits_{\edgeki{i}{k} \in \mathcal P_d^{i}} \Big (  R \big(|\C{\edgeki{i}{k}}{\DTp{i}{k}}{\actionipotp{}}{\tau}|,\edgeki{i}{k}\big) -  R \big(|\C{\edgeki{i}{k}}{\DTpp{i}{k}}{\actionipotpp{}}{\tau}|,\edgeki{i}{k}\big) \Big )
	\end{align} \vspace{-0.2cm}
	\hrule
	\begin{align}\nonumber\label{Proof6}
	& \Phi(\actionipotp{i},\actioni{-i},\tau)-\Phi(\actionipotpp{i},\actioni{-i},\tau)-U^i(\actionipotp{i},\actioni{-i},\tau)+U^i(\actionipotpp{i},\actioni{-i},\tau)= \\
	&\hspace{1cm} \sum\limits_{\edgeki{i}{k} \in \mathcal P_d^{i}}   \bigg(\left[ r(|\C{\edgeki{i}{k}}{\DTp{i}{k}}{\actionipotp{}}{\tau}|,\edgeki{i}{k}\big)-r\big(|\C{\edgeki{i}{k}}{\DTp{i}{k}}{\actionipotpp{}}{\tau}|,\edgeki{i}{k}\big)-R \big(|\C{\edgeki{i}{k}}{\DTp{i}{k}}{\actionipotp{}}{\tau}|,\edgeki{i}{k}\big)\right] \\ \nonumber
	&\hspace{3cm} + \left[r(|\C{\edgeki{i}{k}}{\DTpp{i}{k}}{\actionipotp{}}{\tau}|,\edgeki{i}{k}\big) -  r(|\C{\edgeki{i}{k}}{\DTpp{i}{k}}{\actionipotpp{}}{\tau}|,\edgeki{i}{k}\big) +  
	 R \big(|\C{\edgeki{i}{k}}{\DTpp{i}{k}}{\actionipotpp{}}{\tau}|,\edgeki{i}{k}\big)\right] \bigg)  
	\end{align}
	\vspace{-0.2cm}
	\hrule
	\normalsize
\end{figure*}
\end{proof}

\begin{proof}[Proof of Theorem \ref{lemma2}]
We show that property \eqref{PotentialProperty} holds for the candidate potential function in \eqref{pot3}. Consider two feasible actions of vehicle $i$ denoted by $\actionipotp{}=(\actionipotp{i},\actioni{-i})$ and $\actionipotpp{}=(\actionipotpp{i},\actioni{-i})$, respectively, for an arbitrary $\actioni{-i}$. By the result in Theorem \ref{lemma1}, for a given $\tau$, for all $i\in \mathcal N$, all $\actionipotp{}$,$\actionipotpp{} \in \mathcal W$, we have  
\begin{align*} 
\Phi (\actionipotp{},\tau)-\Phi (\actionipotpp{},\tau)= 
U^{i}\left(\actionipotp{},\tau\right)-U^{i}\left(\actionipotpp{},\tau \right).
\end{align*}
Furthermore, we have \small
\begin{align*}
U^{i}\left(\actionipotp{}\right)-U^{i}\left(\actionipotpp{}\right)= 
\sum \limits_{\tau} \Pr(\boldsymbol{\tau}=\tau)	\left(U^{i}\left(\actionipotp{}, \tau \right)-U^{i}\left(\actionipotpp{}, \tau  \right) \right)
\end{align*} \normalsize
and  \small
\begin{align*}
\Phi\left(\actionipotp{} \right)-\Phi\left(\actionipotpp{}\right)= 
\sum \limits_{\tau} \Pr(\boldsymbol{\tau}=\tau)	\left(\Phi\left(\actionipotp{},\tau \right)-\Phi\left(\actionipotpp{},\tau \right) \right).
\end{align*}		\normalsize	
It follows that the game is an exact potential game by the three equations above. It thus admits at least one NE \cite{Monder1996}. 	
\end{proof}

\bibliographystyle{ieeetr} %
\bibliography{RefDatabase}

\end{document}